\newtheorem{Theorem}{Theorem}
\theoremstyle{definition}
\def\BibTeX{{\rm B\kern-.05em{\sc i\kern-.025em b}\kern-.08em
    T\kern-.1667em\lower.7ex\hbox{E}\kern-.125emX}}
\begin{document}

\title{Centralized Coded Caching with User Cooperation}

\author{
	\IEEEauthorblockN{Jiahui Chen\IEEEauthorrefmark{1}, Haoyu Yin\IEEEauthorrefmark{1}, Xiaowen You\IEEEauthorrefmark{1}, Yanlin Geng\IEEEauthorrefmark{2} and Youlong Wu\IEEEauthorrefmark{1}\\}  
	\IEEEauthorblockA{\IEEEauthorrefmark{1}School of Information Science and Technology, ShanghaiTech University, Shanghai, China}
	\IEEEauthorblockA{\IEEEauthorrefmark{2}State Key Lab. of ISN, Xidian University, Xi'an, China}
	\{chenjh1, yinhy, youxw\}@shanghaitech.edu.cn, gengyanlin@gmail.com, wuyl1@shanghaitech.edu.cn
	
}

\maketitle

\begin{abstract}
        In this paper, we consider the coded-caching broadcast network with user cooperation, where a server connects with multiple users and the users can cooperate with each other through a cooperation network. We propose a  centralized coded caching scheme based on  a  new deterministic  placement strategy and a parallel  delivery strategy. It is shown that the new scheme optimally allocate the communication loads on the server and users,   obtaining \emph{cooperation gain} and \emph{parallel gain} that greatly reduces the transmission delay. Furthermore, we show that the  number of users who parallelly  send information  should decrease when the users' caching size increases. In other words,  letting more  users  parallelly send information could be harmful. Finally, we derive a constant multiplicative  gap between the lower bound and upper bound on the transmission delay, which proves that our scheme is order optimal.  
        

\end{abstract}

\begin{IEEEkeywords}
Cache, cooperation, delay
\end{IEEEkeywords}

\section{Introduction}

Due to the continuous growth of traffic in the network and the growing needs for higher Internet speed from users, it's imperative to improve the performance of the network.   One of the promising directions to improve the quality of service is to utilize the cache memories in the network.  In \cite{Centralized} Maddah-Ali and Niesen proposed a novel scheme, namely coded caching scheme, to improve the transmission efficiency of each transmission. It  obtains a global caching gain by creating multicasting opportunities for different users. This new class of caching system has attracted significant interests \cite{Decentralized,Multiserver'16,improvedgap,GaussianChannel'17,intermanage'17, StoreandLate'17}. 

To further improve the quality of service, one can combine caching with user cooperation. It is particularly common and useful in  fog  network \cite{Fog'Overiew}, where  the edge users  can carry out some amount of communication. This can include, for example,  device-to-device (D2D) networks and ad-hoc networks.  In \cite{D2D},  coded caching schemes were proposed for a D2D noiseless network with the absence of server. In  \cite{Malak'18}, the caching problem on a two-user  D2D  wireless network with the presence of a server was studied. In \cite{Pedersen'19}, a maximum distance separable (MDS)  coded caching scheme was proposed  to reduce the communication load in highly-dense wireless networks considering device mobility.


{
	In this paper,  we study a $K$-user ($K\geq 2$) coded-caching broadcast network with user cooperation. In this network, a server connects with all users through a noiseless shared link, and the users can communicate with each other through a noiseless cooperation network. The cooperation network is parameterized  by a positive integer $\alpha_{\max}\in\{1,\ldots,\lfloor K/2 \rfloor\}$, denoting the maximum number of users allowed to parallelly send data  in the cooperation network.   For example, when $\alpha_{\max}$ equals to 1, the cooperation network operates as  a simple shared  link connecting with all users, which is  easy and low-cost  to implement in fog network. The main contributions are summarized below. 
\begin{itemize}
\item We propose a novel coded caching scheme that fully exploits user cooperation and  optimally allocates  communication loads between the server and users. The scheme achieves a \emph{cooperation gain} (offered by the cooperation among the users) and a \emph{parallel gain} (offered by the parallel transmission of the server and multiple users) that greatly reduces the transmission delay.
\item We show that the  number of users parallelly  sending information  should decrease with the increase of the users' caching size. In other words,  letting more  users  parallelly send information could be harmful. When users' caching size is sufficiently large, only one user should be allowed to send information, indicating that the cooperation network can be just a simple shared link  connecting all users. These  insights could guide  the implementation of practical caching-aided communication networks with   user cooperation.
\item Lower bounds on the transmission delay are established. Moreover, we show that our scheme achieves the optimal transmission delay within a constant multiplicative  gap.
\end{itemize}

}


The remainder of this paper is as follows. We first present the system model in Section \ref{Sec_Model}, and summarize our main results in Section \ref{Sec_Results}. Followed are detailed descriptions of the centralized coded caching schemes with user cooperation in Section  \ref{Sec_Schemes}. Section \ref{Sec_Conclusion} concludes this paper.

\begin{figure}
           \centering
	\includegraphics[width=0.49\textwidth,scale=0.2]{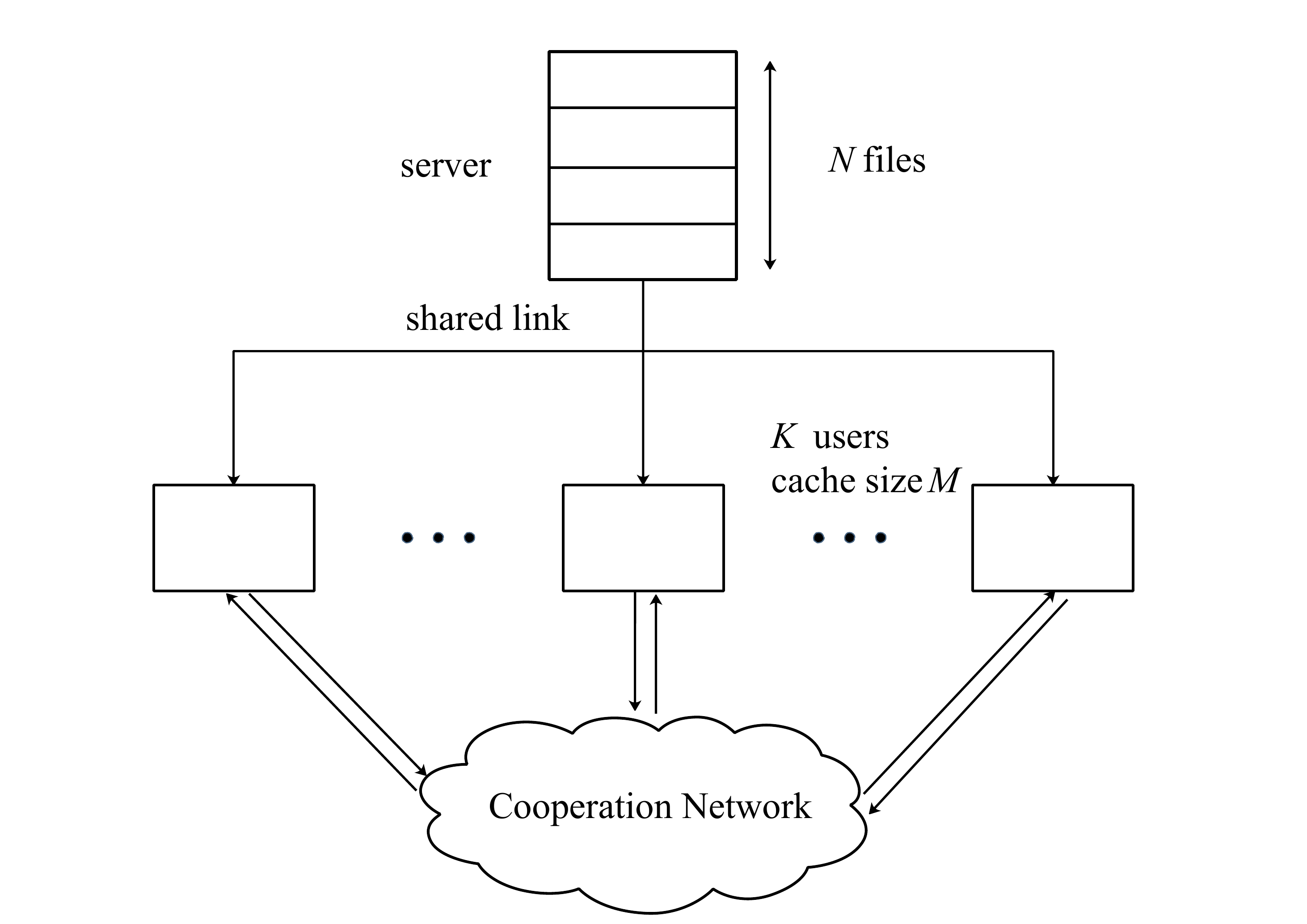}
	\caption{Caching system considered in this paper. A  server connects with $K$ cache-enabled users and the users can cooperate through a flexible network. 
	}
\label{fig_model}
\end{figure}

\section{System Model and Problem Definition}\label{Sec_Model}

Consider a caching network consisting of a single server and $K$ users  as depicted in Fig. \ref{fig_model}.  The server connects with all users through a noiseless shared link, and the users  can communicate  with each other through a cooperation network. 

The server has a database of $N$ independent files $W_1,\ldots,W_N$. Each $W_n$, $n=1,\ldots,N$, is uniformly distributed over \[[2^F]\triangleq\{1,\ldots,2^F\},\] for some positive integer $F$. Each user $k\in[K]$  is equipped with a cache memory of size $MF$ bits, where $M\in [0,N]$.

The cooperation network is parameterized  by a positive integer $\alpha_{\max}\in \mathbb{Z}^+$, denoting the maximum number of users allowed to send data parallelly in the cooperation network. In this work, we made the following assumptions:
\begin{itemize}
\item  The users can receive signals interference-free  from the server and other users simultaneously. This  is reasonable since  the server and the users could operate in two separate bands or layers.  
\item Each user can receive data from at most one user, and if one user sends data, it cannot simultaneously receive data transmitted by other users. This makes  the cooperation network easy to implement and indicates $\alpha_{\max}\leq \lfloor  \frac{K}{2}\rfloor$ .
\item The cooperation network is flexible, in the sense that each user can flexibly select a subset of users to cooperate with, and the subset can be changed during the transmission. This is a similar assumption in \cite{Multiserver'16} and can account for some high-flexibility network such as fog network. 
\end{itemize}
 Given $\alpha_{\max}$ and the assumptions above, the cooperation network can be characterized as below: Let parameter  $\alpha\in \mathbb{Z}^+$, $\alpha\leq \alpha_{\max}$, denote the number of users that exactly  send information parallelly during the data transmission. There exists  a routing strategy at network nodes such that the   $K$ users can be  partitioned into $\alpha$ groups: $\mathcal{G}_1,\ldots,\mathcal{G}_\alpha$, where 
		\begin{IEEEeqnarray}{rCl}
		\mathcal{G}_j\subseteq[K], ~ \mathcal{G}_j\cap \mathcal{G}_i=\emptyset, ~\mathcal{G}_1\cup\cdots \cup\mathcal{G}_{\alpha}=[K].
\end{IEEEeqnarray}
In each group $\mathcal{G}_j$, $j=1,\ldots,\alpha$, a user $u_j$ is allowed to send data to all other users in $\mathcal{G}_j$. The group $\{\mathcal{G}_j\}$ and the user $u_j$ can be  changed during the delivery phase.
	Notice that when  $\alpha_\text{max}=1$, the flexible network degenerates into a  fixed shared link, allowing only  one user to send data to all other users, which is more common and easier to  implement.




The system  works in two phases: a placement phase and a delivery phase. In the placement phase, all users will access the entire library $W_1,\ldots,W_N$ and fill the content to their caches. More specifically,  each user $k$ maps  $W_1,\ldots,W_N$ to its cache contents: 
\begin{IEEEeqnarray}{rCl}
Z_k \triangleq \phi_k(W_1,\ldots,W_N),
\end{IEEEeqnarray}
for some caching function
\begin{IEEEeqnarray}{rCl}\label{eq:caching}
\phi_{k}: [2^F]^N\rightarrow [\lfloor2^{MF}\rfloor].
\end{IEEEeqnarray}

 In the delivery phase,  each user  requests one of the $N$ files from the library. We denote the demand of  user $k$ by 
$d_k\in[N]$,
and the corresponding file  by  $W_{d_k}$. Let 
$\mathbf{d}\triangleq(d_1,\ldots,d_k)$ denote the users' request vector. After the users' requests $\bf{d}$ are informed to  the server and all users,  the server produces symbol
$X\triangleq f_{\bf{d}}(W_1,\ldots,W_N)$,
and user $k\in\{1,\ldots,K\}$ produces symbol\footnote{Each user $k$ can produce $X_k$ as a function of $Z_k$ and the received signals from the server, but since the transmitter's signal is broadcasted to all users, it's equivalent to send $X_k=f(Z_k)$.}
\begin{IEEEeqnarray}{rCl}
X_k\triangleq f_{k,{\bf{d}}}(Z_k),
\end{IEEEeqnarray}
 for some encoding functions
 \begin{subequations}\label{eq:encoding}
\begin{IEEEeqnarray}{rCl}
&&f_{\bf{d}}:[2^F]^N\rightarrow [\lfloor 2^{R_1F} \rfloor],\\
&& f_{k,\bf{d}}: [\lfloor 2^{MF} \rfloor]\rightarrow  [\lfloor 2^{R_2F} \rfloor]\quad \label{eq:userencoding}
\end{IEEEeqnarray}
\end{subequations}
 where $R_1$ and $R_2$ denote the rate transmitted by the server each user, respectively.

 User $k$ perfectly observes the signals sent by the server and other users,  and decodes its desired message as
\[\hat{W}_{d_k}=\psi_{k,\bf{d}}(X,Y_k,Z_k)\]
 where $Y_k\in\{X_1,\ldots,X_K\}$ denotes user $k$'s received signal sent from other users, and  $\psi_{k,\bf{d}}$ is some decoding function 
 \begin{IEEEeqnarray}{rCl}\label{eq:decoding}
 \psi_{k,\bf{d}}: [\lfloor 2^{R_1F} \rfloor]\times [\lfloor 2^{R_2F} \rfloor] \times[ \lfloor 2^{MF} \rfloor]\rightarrow [2^F].
\end{IEEEeqnarray}

 We define the worst-case probability of error as
\begin{IEEEeqnarray}{rCl}
P_e \triangleq \max_{{\bf{d}}\in \mathcal{F}^n} \max_{k\in[K]} \text{Pr} \left(\hat{W}_{d_k} \neq {W}_{d_k}\right).
\end{IEEEeqnarray}

A caching scheme $(M_1,R_1,R_2)$ consists of caching functions \eqref{eq:caching}, encoding functions \eqref{eq:encoding} and decoding functions \eqref{eq:decoding}. We say that the rate region $(M,R_1,R_2)$  is \emph{achievable} if for every $\epsilon>0$ and every large enough file size $F$, there exists a caching scheme such that $P_e$    is less than $\epsilon$. 

Given achievable region $(M,R_1,R_2)$, we define  the \emph{transmission delay} 
\begin{IEEEeqnarray}{rCl}
R\triangleq \max\{R_1,R_2\},
\end{IEEEeqnarray}
and 
the \emph{optimal}  transmission delay 
$R^*\triangleq \inf\{ R\}$.

Our goal is to design the coded caching scheme such that the transmission delay  is minimized. 
Finally,  in this paper we assume
$K \leq N$. Extending the results to the scenario $K \geq N$ is straightforward, referred to \cite{Centralized}.
 
\section{Main Results}\label{Sec_Results}
Consider the cache-aided network with user cooperation described in Section \ref{Sec_Model}. 


\begin{Theorem}\label{Thrm_UpperBound}
	Let $t\triangleq KM/N\in \mathbb{Z}^+$. For  memory size $M \in\{0,N/K,2N/K,\ldots,N\}$, the optimal transmission delay $R^*$ is upper bounded by $R^*\leq R_\textnormal{C}$, where 
	\begin{IEEEeqnarray}{rCl}\label{eq:achievable_rate}
		 R_\textnormal{C} \triangleq \!\!\min_{1\leq\alpha \leq \alpha_\textnormal{max}}\!\!\! K\Big(1-\frac{M}{N}\Big)\frac{1}{1+t+{\alpha\min\{\lfloor  \frac{K}{\alpha}\rfloor\!-\!1,t\}}}.\quad 
	\end{IEEEeqnarray}
	For general $0 \leq M \leq N$, the lower convex envelope of these points is achievable. 
\end{Theorem}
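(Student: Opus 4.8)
The plan is a constructive achievability proof at the corner points $t=KM/N\in\mathbb{Z}^+$, followed by memory-sharing for general $M$. Write $g\triangleq\min\{\lfloor K/\alpha\rfloor-1,t\}$; for each admissible $\alpha$ I will build a scheme of delay $K(1-M/N)/(1+t+\alpha g)$ and then minimize over $\alpha$. For placement I use the symmetric Maddah--Ali--Niesen rule: partition each file $W_n$ into $\binom{K}{t}$ equal subfiles $W_{n,\mathcal{T}}$ indexed by the $t$-subsets $\mathcal{T}\subseteq[K]$, and let user $k$ cache $W_{n,\mathcal{T}}$ iff $k\in\mathcal{T}$, which meets the budget $MF$ bits. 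Each user $k$ then misses exactly the subfiles $W_{d_k,\mathcal{T}}$ with $k\notin\mathcal{T}$, so the total missing content over all users equals $K(1-M/N)$ files.

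The delivery runs two streams in parallel on the interference-free server and cooperation bands; I first split every missing subfile into a server-part of fraction $1-\lambda$ and a cooperation-part of fraction $\lambda$. The server stream is the classical coded multicast: for each $(t+1)$-subset $\mathcal{S}$ it sends $\bigoplus_{k\in\mathcal{S}}W_{d_k,\mathcal{S}\setminus\{k\}}$ on the server-parts, delivering one fresh subfile to each of $t+1$ users per subfile-sized symbol, i.e.\ with multicast gain $1+t$. The cooperation stream partitions $[K]$ into $\alpha$ groups of size about $\lfloor K/\alpha\rfloor$ and inside each group performs a within-group device-to-device multicast: a chosen transmitter sends $\bigoplus_{k\in\mathcal{S}}W_{d_k,(\mathcal{S}\setminus\{k\})\cup\mathcal{R}_k}$ over a set $\mathcal{S}$ of $g+1$ group members, with a size-$(t-g)$ remainder $\mathcal{R}_k$; every receiver decodes since it caches all other terms and the transmitter caches its own, so this serves $g$ users per symbol, and with $\alpha$ groups active simultaneously the cooperation gain is $\alpha g$. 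Equating the two finishing times $(1-\lambda)K(1-M/N)/(1+t)=\lambda K(1-M/N)/(\alpha g)$ fixes $\lambda=\alpha g/(1+t+\alpha g)\in(0,1)$ and yields $R_1=R_2=K(1-M/N)/(1+t+\alpha g)$, so $R=\max\{R_1,R_2\}$ attains the claimed value.

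The main obstacle is the cooperation schedule. A single fixed partition cannot reach every missing subfile, since delivering $W_{d_k,\mathcal{T}}$ by a within-group order-$(g+1)$ multicast requires $g$ members of $\mathcal{T}$ to share $k$'s group; I would overcome this by exploiting the model's freedom to change the groups and transmitters during delivery, rotating through partitions so that, for each missing subfile, $k$ is eventually co-grouped with $g$ of the users indexing it (possible because $|\mathcal{T}|=t\ge g$ and each group has at least $g+1$ members). The hard part is to organize these rotating partitions into a balanced parallel schedule that keeps all $\alpha$ groups simultaneously active at gain $g$, respects the one-transmitter-per-group constraint, and delivers each cooperation-part exactly once; the residual rounding and boundary effects (from $K$ not dividing $\alpha$ and from incomplete final multicast groups) I expect to absorb into vanishing terms by taking $F$ large. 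Finally, minimizing the resulting delay over $1\le\alpha\le\alpha_\textnormal{max}$ gives $R_\textnormal{C}$, and for general $0\le M\le N$ the lower convex envelope follows by memory-sharing between adjacent integer values of $t$.
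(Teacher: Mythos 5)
Your proposal is correct and follows essentially the same route as the paper's own proof: Maddah--Ali--Niesen placement, a split of the missing subfiles between a server stream (multicast gain $1+t$) and a cooperation stream run by $\alpha$ parallel groups with per-group gain $g=\min\{\lfloor K/\alpha\rfloor-1,t\}$, rotating group partitions, and the equal-finishing-time condition — your $\lambda=\alpha g/(1+t+\alpha g)$ is exactly the paper's load-allocation ratio $L_1/L$ from its balancing condition. The scheduling step you flag as the hard part is treated no more rigorously in the paper itself: it simply asserts that changing partitions delivers all subfiles, enforces integrality of the user-transmission count via a divisibility condition on $L_1$, and illustrates the leftover/boundary handling only through an example, so your large-$F$ rounding argument is on par with (and more candid than) the paper's treatment.
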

\begin{proof}
See the scheme in Section \ref{Sec_Schemes}.
\end{proof}

Recall that parameter $\alpha$ denotes the number of users who exactly  send information parallelly in the delivery phase. From \eqref{eq:achievable_rate}, it's easy to obtain the optimal value of $\alpha$, denoted by $\alpha^*$:
\begin{equation} \label{eq:optimalAlpha}
	\alpha^*=\!\left\{
	\begin{aligned}
	&1, &t\geq K-1,\\
	&\max\! \Big\{\!\alpha\!:\! \lfloor\frac{K}{\alpha}\rfloor\!-\!1\!=\!t\Big\}, \!&\lfloor  \frac{K}{\alpha_\textnormal{max}}\rfloor\!-\!1 \!<\!t\!<\! K\!-\!1,\\
	&\alpha_\textnormal{max}, & t \leq \lfloor  \frac{K}{\alpha_\textnormal{max}}\rfloor\!-\!1.
	\end{aligned}
	\right.
\end{equation}
It's interesting to find that the  number of users parallelly  sending information should decrease as  the users' caching size $M$ increases for given  $(K, N, \alpha_{\max})$. To simplify the explanation, we assume $\alpha_\textnormal{max}=\lfloor \frac{K}{2}\rfloor$ and  $KM/N\in \mathbb{Z}^+$.
When $M\leq N(\lfloor  \frac{K}{\alpha_\textnormal{max}}\rfloor-1)/K$, we have $\alpha^*=\alpha_\textnormal{max}$ and thus it's beneficial to let the most   users  parallelly send information.  As  $M$ increases,  $\alpha^*$ decreases and $\alpha^*< \alpha_\textnormal{max}$, which indicates that  letting more  users  parallelly send information could be harmful. For example, when $N=100$, $K=10$, $\alpha_\textnormal{max}=5$, $M=40$, then $\alpha^*=2<\alpha_\textnormal{max}$, and thus only two users rather than 5 users should parallelly send data.  In the extreme case when $M\geq (K-1)N/K$,  only one user should be allowed to send information,   implying that when users' caching size is sufficiently large, the  cooperation network can be just a simple share link  connecting all users. 

	
	
	
 Comparing $R_\textnormal{C}$ with the delay achieved by the  scheme without user cooperation in \textnormal{\cite{Centralized}}, i.e., $K\big(1-\frac{M}{N}\big)\frac{1}{1+t}$,    $R_\textnormal{C}$ consists of an additional factor \[G_\text{c}\triangleq\frac{1}{1+{\frac{\alpha^*}{1+t}\min\{\lfloor  \frac{K}{\alpha^*}\rfloor\!-\!1,t\}}},\] which we call the \emph{cooperation gain},  since it arises  from the user cooperation.  Comparing  $R_\textnormal{C}$  with  the delay achieved by the  scheme for D2D network without server \cite{D2D}, i.e.,  $\frac{N}{M}(1-\frac{M}{N})$,   $R_\textnormal{C}$ consists of an additional factor 
	\[ G_\text{p} \triangleq\frac{1}{1+\frac{1}{t}+\frac{\alpha^*}{t}\min\{\lfloor  \frac{K}{\alpha^*}\rfloor\!-\!1,t\}},
	\]
 which we call the \emph{parallel gain},  since it arises  from  the parallel transmission of the server and users. Both gains  depend on $K$, $M/N$ and $\alpha_{\max}$. When fixing $(K, N, \alpha_{\max})$, $G_\text{c}$ in general is not a  monotonic function of $M$. It is monotonic decreasing when $M\leq \lfloor  \frac{K}{\alpha_\textnormal{max}}\rfloor\!-\!1$, and monotonic decreasing when $M\geq \lfloor  \frac{K}{\alpha_\textnormal{max}}\rfloor\!-\!1$.  $G_\text{p}$ is  a  monotonic increasing function of $M$.   The cooperation gain and parallel gain are plotted in  Fig. \ref{fig_gain} for given  $(K, N, \alpha_{\max})$.  
 \begin{figure}
           \centering
	\includegraphics[width=0.49\textwidth,scale=0.2]{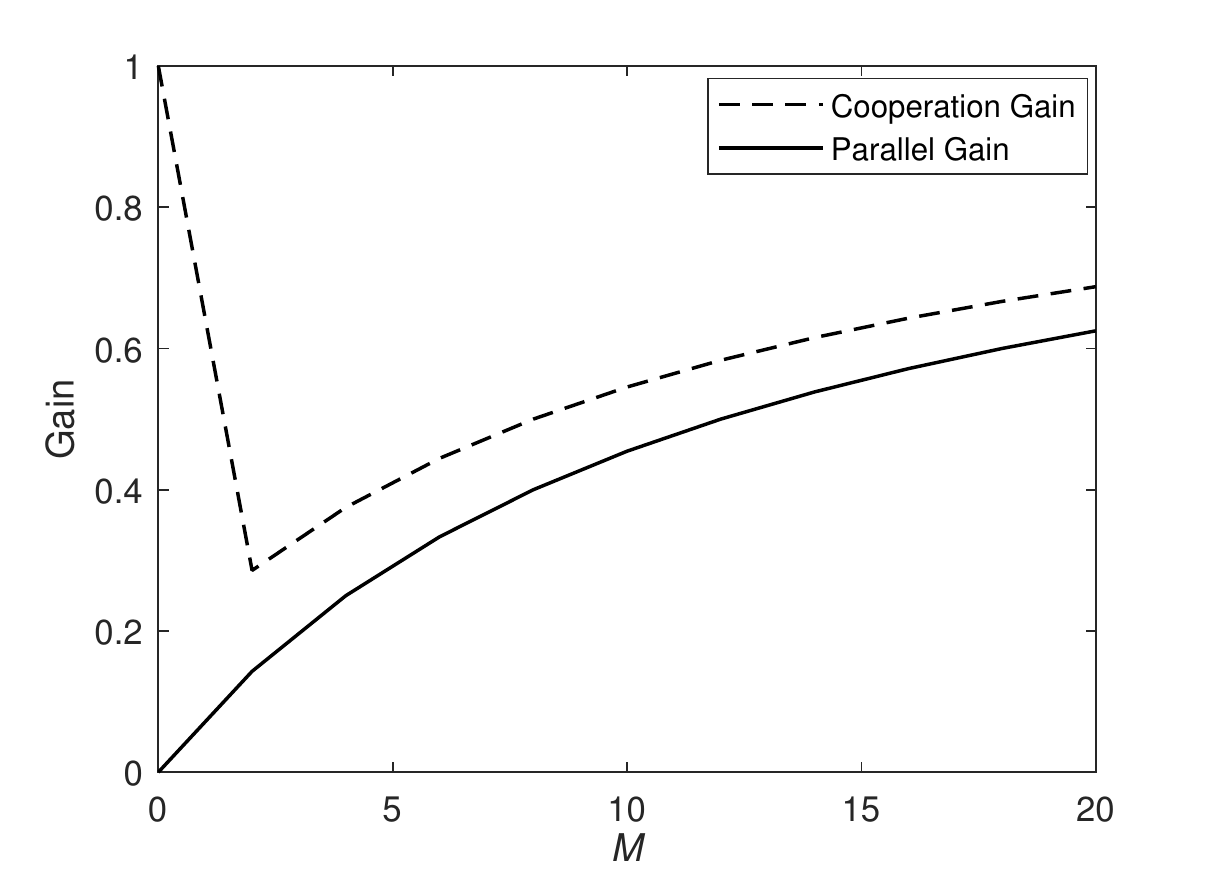}
	\caption{Cooperation gain and parallel gain  when $N=20$, $K=10$ and $\alpha_{\max}=5$. 
	}
\label{fig_gain}
\end{figure}

\begin{Theorem}\label{Thrm_LowerBound}
	For  memory size $0 \leq M \leq N$, the optimal transmission delay is lower bounded by
	\begin{IEEEeqnarray}{rCl}\label{eq:cutset}
		R^*\geq \max&&\left\{\frac{1}{2}\Big(1-\frac{M}{N}\Big),\max\limits_{s\in [K]}\Big(s-\frac{KM}{\lfloor N/s\rfloor}\Big), \right.\nonumber\\&&\qquad\left.\max\limits_{s\in [K]}\Big(s-\frac{sM}{\lfloor N/s\rfloor}\Big)\frac{1}{1+\alpha_\textnormal{max}}\right\}.\label{eq:cutset2}
	\end{IEEEeqnarray}
\end{Theorem}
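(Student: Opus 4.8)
The plan is to prove each of the three terms in the maximum of \eqref{eq:cutset2} by a separate cut-set (genie-aided) argument combined with Fano's inequality. I fix an achievable scheme $(M,R_1,R_2)$, drive it with a chosen sequence of demand vectors, and lower bound the joint entropy of the reconstructed files by the entropy of all signals the relevant decoders observe, using $H(X)\le R_1F$, $H(X_k)\le R_2F$, $H(Z_k)\le MF$ and $R_1,R_2\le R$; Fano's inequality keeps the error contribution $o(F)$ as $F\to\infty$. For the first term I single out user $1$ and run the scheme over $N$ rounds in which user $1$ successively requests $W_1,\dots,W_N$. Since its cache $Z_1$ is fixed and it recovers $W_{d_1}$ from $(X,Y_1,Z_1)$ in each round, the entire library is a function of the $N$ server signals, the $N$ received cooperation signals, and $Z_1$, whence $NF\le NR_1F+NR_2F+MF$; combined with $R_1+R_2\le 2R$ this gives $R\ge\frac{1}{2}(1-M/N)$.

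For the second term I fix any $s\in[K]$, set $b\triangleq\lfloor N/s\rfloor$, and run $b$ rounds in which the first $s$ users request $sb$ pairwise-distinct files. The key observation is that every cooperation transmission $X_k=f_{k,\mathbf d}(Z_k)$ is a deterministic function of the caches, so conditioned on $(Z_1,\dots,Z_K)$ all user-to-user signals in every round are determined. Hence the $sb$ requested files are a function of the $b$ server signals together with $(Z_1,\dots,Z_K)$ alone, giving $sbF\le bR_1F+KMF$, i.e. $R\ge R_1\ge s-KM/\lfloor N/s\rfloor$; maximizing over $s$ yields the term.

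For the third term I use the same $s$ users and $b=\lfloor N/s\rfloor$ rounds but do not collapse the cooperation signals into caches; instead I exploit the parallelism constraint. In any single round the users form at most $\alpha_\textnormal{max}$ cooperation groups with one sender each, so the whole network emits at most $\alpha_\textnormal{max}$ distinct user-to-user signals, each of at most $R_2F$ bits. Thus the $sb$ files are a function of the $b$ server signals, the at most $b\alpha_\textnormal{max}$ cooperation signals, and only the $s$ caches $Z_1,\dots,Z_s$, giving $sbF\le bR_1F+b\alpha_\textnormal{max}R_2F+sMF$. Dividing by $bF$ and using $R_1+\alpha_\textnormal{max}R_2\le(1+\alpha_\textnormal{max})R$ gives $R\ge\big(s-sM/\lfloor N/s\rfloor\big)/(1+\alpha_\textnormal{max})$.

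The delicate step is the counting in the third bound: I must argue that, no matter how the scheme assigns senders and forms groups in a given round, the chosen decoders collectively observe at most $\alpha_\textnormal{max}$ distinct cooperation signals. This follows from the model constraint $\alpha\le\alpha_\textnormal{max}$ (at most $\alpha_\textnormal{max}$ parallel senders, one per group), but it must be phrased to hold uniformly over the scheme's group assignments and over which of the $s$ users are themselves senders, since a sender receives nothing new. I expect this bookkeeping---together with verifying that the accumulated Fano error terms over the $b$ rounds still vanish as $F\to\infty$---to be the only real subtlety; the first two bounds are routine single-cut arguments.
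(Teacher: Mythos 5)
Your proposal is correct and follows essentially the same route as the paper's converse: both derive the second and third terms from the identical cut over the first $s$ users and $\lfloor N/s\rfloor$ demand rounds, once collapsing all user-to-user signals into the $K$ caches (since each $X_k$ is a function of $Z_k$, giving the $KMF$ bound) and once keeping them separate and bounding the per-round cooperation traffic by $\alpha_{\max}R_2F$ bits, then using $R\geq\max\{R_1,R_2\}$. Your handling of the first term (an $N$-round cut at a single user yielding $R_1+R_2\geq 1-M/N$) is in fact a more rigorous rendering of the paper's one-line ``ideal case'' sketch, but it rests on the same cut-set idea, and the bookkeeping subtlety you flag for the third bound (aggregate cooperation traffic per round at most $\alpha_{\max}R_2F$ regardless of how groups and senders are reassigned) is glossed over in the paper in exactly the same way.
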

\begin{proof}
See the proof in Appendix \ref{App_converse}.
\end{proof}

\begin{Theorem}\label{Thrm_Gap}
	For  memory size $0 \leq M \leq N$, 
	\begin{IEEEeqnarray}{rCl}
		R_\textnormal{C}  /R^*\leq 31.
	\end{IEEEeqnarray}
\end{Theorem}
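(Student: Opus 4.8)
The plan is to prove order optimality in the standard way for constant-gap results: bound the achievable delay $R_\textnormal{C}$ from above and the optimal delay $R^*$ from below (using Theorem~\ref{Thrm_LowerBound}), then partition the memory axis $M\in[0,N]$ into a small number of regimes and bound the ratio $R_\textnormal{C}/R^*$ by a constant in each regime; the stated gap $31$ is the worst case over the regimes. A key simplification is that $R_\textnormal{C}$ in \eqref{eq:achievable_rate} is a \emph{minimum} over $\alpha$, so any feasible $\alpha\in\{1,\ldots,\alpha_\textnormal{max}\}$ already yields a valid upper bound on $R_\textnormal{C}$. I am therefore free to select the most convenient $\alpha$ in each regime rather than tracking the true minimizer $\alpha^*$ of \eqref{eq:optimalAlpha}.

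For the upper bound, I would first control the floor/min operations in the denominator of \eqref{eq:achievable_rate} with elementary estimates, e.g. $\lfloor K/\alpha\rfloor-1\ge K/(2\alpha)$ whenever $K/\alpha\ge 2$, so that the denominator $1+t+\alpha\min\{\lfloor K/\alpha\rfloor-1,t\}$ is replaced by a clean lower bound and hence $R_\textnormal{C}$ by a clean upper bound. In the large-memory regime I would take $\alpha=1$, for which the cooperation term reduces to $\min\{K-1,t\}$ and $R_\textnormal{C}\le K(1-M/N)/(1+t+\min\{K-1,t\})$; in the small- and intermediate-memory regimes I would take $\alpha=\alpha_\textnormal{max}$, for which (when $t\le\lfloor K/\alpha_\textnormal{max}\rfloor-1$) the denominator becomes $1+t(1+\alpha_\textnormal{max})$, exposing the parallel/cooperation gain that the converse must match.

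For the lower bound I would use each of the three terms of \eqref{eq:cutset2} in a different regime. The term $\tfrac12(1-M/N)$ controls the large-memory regime, where $1-M/N$ is the dominant factor. The cut-set-type term $\max_{s}(s-KM/\lfloor N/s\rfloor)$ controls the small-memory regime: choosing $s=K$ and using inequalities of the form $\lfloor N/K\rfloor\ge N/(2K)$ gives a bound of order $K(1-2t)$, which matches $R_\textnormal{C}\approx K/(1+t(1+\alpha_\textnormal{max}))$ up to a constant once $t$ is bounded away from the relevant threshold. The crucial term is the third one, $\max_{s}(s-sM/\lfloor N/s\rfloor)\tfrac{1}{1+\alpha_\textnormal{max}}$, which carries the factor $1/(1+\alpha_\textnormal{max})$ and is the only converse term able to absorb the large cooperation gain in $R_\textnormal{C}$ when $\alpha_\textnormal{max}$ is large; here I would choose an integer $s$ close to the cut-set optimizer $s\approx N/(2M)$ (clipped to $[K]$) so that $s-sM/\lfloor N/s\rfloor$ is of order $\min\{K,N/M\}$. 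A quick sanity check shows both sides then scale like $N/(M\,\alpha_\textnormal{max})$, confirming that the matching is genuine and the ratio is $O(1)$ in this regime.

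The main obstacle, and the source of the relatively large constant $31$, is the bookkeeping of rounding losses and integrality constraints. Each floor $\lfloor K/\alpha\rfloor$, $\lfloor N/s\rfloor$ and each clipping of a real optimizer $s$ (or $\alpha$) to a nearby integer in $[K]$ (or $\{1,\ldots,\alpha_\textnormal{max}\}$) costs a multiplicative factor of roughly $2$, and several such factors compound across the numerator, the denominator, and the three lower-bound terms. I would therefore need to fix the regime boundaries (stated in terms of $t=KM/N$ and $M/N$) so that the regimes overlap and cover all of $[0,N]$, and in each regime verify a chain of inequalities of the form $R_\textnormal{C}/R^*\le(\text{product of }O(1)\text{ factors})$. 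The per-regime verification is routine once the matching choices of $\alpha$ and $s$ are fixed; the delicate part is ensuring the integer $s$ (respectively $\alpha_\textnormal{max}$) used in the converse is simultaneously compatible with the memory regime and with the corresponding achievable choice, so that the cooperation gain $1+\alpha_\textnormal{max}$ in $R_\textnormal{C}$ is actually cancelled by the $1/(1+\alpha_\textnormal{max})$ factor in the lower bound rather than left dangling.
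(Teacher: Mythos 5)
Your proposal has the same skeleton as the paper's proof: split the memory axis by $t=KM/N$ into small/intermediate/large regimes, match them respectively to the second, third, and first terms of Theorem~\ref{Thrm_LowerBound}, and exploit that $R_\textnormal{C}$ in \eqref{eq:achievable_rate} is a minimum over $\alpha$ so that a convenient $\alpha$ may be chosen per regime. The gap is in how the specific constant $31$ is reached, and with your stated tools the final assembly would fail.

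The paper does not re-derive the cut-set estimate in the intermediate regime; it imports the factor-$12$ gap of \cite[Theorem 3]{Centralized}, namely $\max_{s}\bigl(s-sM/\lfloor N/s\rfloor\bigr)\geq \tfrac{1}{12}K(1-M/N)\tfrac{1}{1+t}$, so that for $t>0.6393$ and $\alpha=\alpha_{\max}$ the ratio is at most $12\,\tfrac{1+\alpha_{\max}}{1+\alpha_{\max}\cdot 0.6393/1.6393}\leq 12/0.39\approx 30.8$; and in the small-$t$ regime it takes $s=0.275N$ (not $s=K$), giving $R^*\geq 0.0325N\geq N/31$ for all $t\leq 0.6393$, against the trivial bound $R_\textnormal{C}\leq\min\{K,N\}$. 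The two thresholds $0.6393$ and $0.275$ are tuned jointly so that both regimes land just under $31$. Your plan replaces this with $s=K$ and factor-of-$2$ floor estimates: the resulting small-regime bound $R^*\gtrsim K(1-2t)$ degenerates as $t\to 1/2$, forcing the handoff to the intermediate regime at some $t<1/2$, where the unavoidable factor $(1+t)/t$ exceeds $3$; even granting the best known cut-set constant ($12$), this gives $12\times 3=36>31$, and your own from-scratch estimates (each floor costing a factor $2$, as you note) would be lossier still. So the regimes as you set them up do not glue together under $31$: you need either the citation the paper uses together with its tuned thresholds and an $s=\Theta(N)$ choice in the small regime, or a genuinely sharper re-derivation, neither of which is in the outline.

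A secondary issue: committing to $\alpha\in\{1,\alpha_{\max}\}$ leaves the region $\lfloor K/\alpha_{\max}\rfloor-1<t\ll K$ uncovered. There your intermediate regime does not apply (it needs $t\leq\lfloor K/\alpha_{\max}\rfloor-1$), and your large-memory bound with $\alpha=1$ against $\tfrac12(1-M/N)$ gives only $2K/(1+2t)$, which is $O(1)$ only when $t=\Omega(K)$. The repair is easy but must be stated: in that region keep $\alpha=\alpha_{\max}$ (or $\alpha=\lfloor K/(t+1)\rfloor$, the $\alpha^*$ of \eqref{eq:optimalAlpha}) and compare against the first lower-bound term, using $\alpha_{\max}\bigl(\lfloor K/\alpha_{\max}\rfloor-1\bigr)\geq\max\{K-2\alpha_{\max},\,\alpha_{\max}\}\geq K/3$ to bound the ratio by a small constant. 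The paper's own write-up is terse on exactly this point (its second regime nominally picks $\alpha=1$, which is only valid once $t\geq K-1$), but your plan forecloses the fix by fixing the two admissible values of $\alpha$ in advance.
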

\begin{proof}
See the proof in Appendix \ref{App_Gap}.
\end{proof}

\begin{figure}
	\centering
	\includegraphics[width=0.49\textwidth,scale=0.2]{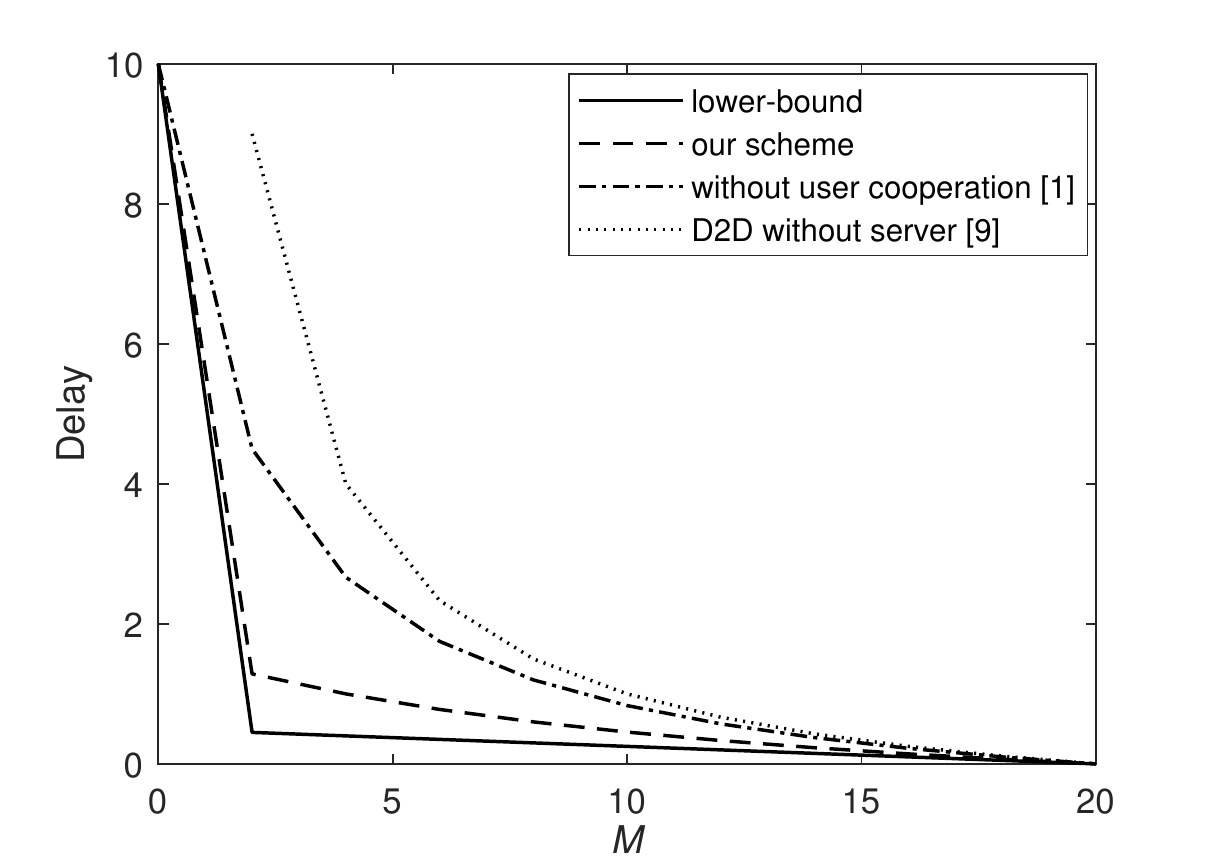}
	\caption{The lower  and upper bounds on the transmission delay  when $N=20$, $K=10$ and $\alpha_{\max}=5$.
	}
	\label{fig_result}
\end{figure}

The numerical result is presented in Fig. \ref{fig_result}. It shows that the delay of the proposed scheme is  close to the lower bound and much tighter than the upper bound achieved by the the schemes in \cite{Centralized} and \cite{D2D}.

\section{Proof of Theorem \ref{Thrm_UpperBound}} \label{Sec_Schemes}
In this section, we present a caching scheme  achieving an upper bound on the transmission delay for the network  depicted in Fig. \ref{fig_model}.  One may come up with the idea to use the  scheme introduced in \cite{D2D} considering caching-aided D2D network without server, but there are two main problems:  
 \begin{itemize}
 \item When each user sends data to other users in its partition group,   in order to achieve the maximum multicast gain, user $u_j$  in the  group $\mathcal{G}_j$ should broadcast a coded data consisting of $|\mathcal{G}_j|\!-\!1$ useful subfiles required by the remaining users in  group $\mathcal{G}_j$. Also, the amount of subfiles should support  the server and $\alpha$ users  to simultaneously send data in every transmission slot. These can not be guaranteed by the file-splitting process and caching placement phase introduced in \cite{D2D}.
\item In  \cite{D2D}, the users are fixed in a mesh network, leading to an unchanging group partition during the delivery phase, which is not the same case in our model. Moreover,  our model can have the server  share some communication loads with the users. These two facts result in great difference in the delivery phase compared to that  in \cite{D2D}. To achieve the optimal delay, we need to fully exploit the  dynamic strategy and  optimally allocate the communicate loads at the server and users.  
 \end{itemize}
 
We describe a novel coded caching scheme for any $K$, $N$ and $M$ such that $t= \frac{KM}{N}$ is a positive integer.  When $t$ is not an integer, 
 we can use a resource sharing scheme as in  \cite{Centralized}.

 Introduce integers $\alpha$, $L_1,L_2$ and $L\triangleq L_1+L_2$, where
$1\leq\alpha\leq \alpha_\text{max}$, $L_2> 0$, and  $L_1\geq 0$ such that 
\begin{subequations}
\begin{IEEEeqnarray}{rCl}\label{eq_Lcondition}
&&\frac{K\cdot\binom{K-1}{t}\cdot L_1}{\alpha\min\{\lfloor  \frac{K}{\alpha}\rfloor-1,t\}}\in \mathbb{Z}^+,
\end{IEEEeqnarray}
and 
\begin{IEEEeqnarray}{rCl}\label{eq_Acondition}
&& \frac{L_1}{L_2}=\frac{{\alpha\min\{\lfloor  \frac{K}{\alpha}\rfloor-1,t\}}}{1+t}.
\end{IEEEeqnarray}
\end{subequations}
 Here $L_1/L$ and $L_2/L$ denote the proportions of communication loads assigned to the users and the server respectively. Condition \eqref{eq_Lcondition} ensures that the number of subfiles can support  a maximum multicast gain when user sending data,  and \eqref{eq_Acondition} ensures that   communication loads can be optimally allocated  at the server and the users.
 


 In the placement phase, each file is split into $L\binom{K}{t}$ subfiles of equal size. We index the subfiles of  $W_n$ by the superscript $l\in [L]$ and subscript $\mathcal{T}\subset [K]$:
 \begin{IEEEeqnarray}{rCl}
 W_n = \left(W_{n,\mathcal{T}}^{l}: l\in[L],\mathcal{T}\subset [K], |\mathcal{T}|=t\right).
\end{IEEEeqnarray}
User $k$ caches all the subfiles when $k \in \mathcal{T}$ for all $n=1,...,N$ and $l=1,...,L$, so it requires
$$N\cdot \frac{F}{L\binom{K}{t}} \cdot L\binom{K-1}{t-1} =F\cdot \frac{Nt}{K}=MF$$
bits of cache, satisfying the cache size constraint. 



In the delivery phase, Each user $k$ requests file $W_{d_k}$. The requests vector $\mathbf{d}$ are informed by  the server and all the users. Note that different parts of the file $W_{d_k}$ has been stored in the users' caches, and  thus the uncached parts of $W_{d_k}$ can   be sent by the server and users. Divide the uncached subfiles of $W_{d_k}$ into two parts: one that is sent by the server and the other that is sent by the users in the network.
Subfiles
 \[\left(W_{d_k,\mathcal{T}}^{1},\ldots,W_{d_k,\mathcal{T}}^{L_1}:\mathcal{T}\subset [K], |\mathcal{T}|=t, k\notin \mathcal{T}\right)\]   are requested  by user $k$ and will be sent by the users, thus  $\frac{L_1}{L}$ represents the fraction of the subfiles sent by the users. Subfiles  \[\left(W_{d_k,\mathcal{T}}^{L_1+1},\ldots,W_{d_k,\mathcal{T}}^{L}:\mathcal{T}\subset [K], |\mathcal{T}|=t, k\notin \mathcal{T}\right)\] are requested  by user $k$  and  will be sent by the server, thus $\frac{L_2}{L}$ represents the the fraction of the subfiles sent by the server.  
 
 Our objective is to get the upper bound of transmission delay in the worst request case, so we assume that each of the users makes unique requests. Thus,  the total number of requested subfiles for $K$ users is $K\cdot\binom{K-1}{t}\cdot L$.

First consider the subfiles sent by the users. In order to create multicast opportunities among users, we partition the $K$ users into $\alpha$ groups of equal size: \[\mathcal{G}_1,\ldots,\mathcal{G}_{\alpha}\] where for $i,j=1,\ldots,\alpha$,
$\mathcal{G}_i\subseteq [K]: |\mathcal{G}_i|={\lfloor {K}/{\alpha} \rfloor}$, and  $\mathcal{G}_i\cap \mathcal{G}_j=\emptyset$,~\textnormal{if $i\neq j$}. In each group $\mathcal{G}_i$, one of $\lfloor {K}/{\alpha} \rfloor$ users plays the role of server and  sends symbols based on its cached contents to the remaining $(\lfloor {K}/{\alpha} \rfloor-1)$ users in the group. 

Focus on a group $\mathcal{G}_i$ and a set $\mathcal{S}\subset[K]:|\mathcal{S}|=t+1$. If $|\mathcal{G}_i|\leq|\mathcal{S}|$, then all nodes in $\mathcal{G}_i$ share  subfiles  $(W^l_{n,\mathcal{T}}:l\in[L_1],n\in[N],\mathcal{G}_i\subseteq\mathcal{T},|\mathcal{T}|=t)$. For this case,  user  $k\in\mathcal{G}_i$ sends a XOR symbol that contains the requested subfiles for all remaining $\lfloor K/\alpha \rfloor -1$ users in $\mathcal{G}_i$.  If $|\mathcal{G}_i|>|\mathcal{S}|$, then the nodes in $\mathcal{S}$ share  subfiles  $(W^l_{n,\mathcal{T}}:l\in[L_1],n\in[N],\mathcal{T}\subset\mathcal{S},|\mathcal{T}|=t)$. For this case,  user $k\in\mathcal{S}$ sends a XOR symbol that contains the requested subfiles for all remaining $t$ users in $\mathcal{S}$.  Other groups perform the similar steps and concurrently deliver the remaining requested subfiles to other users.

By changing the partition and performing the delivery strategy described above, we can finally send all the requested subfiles 
\begin{IEEEeqnarray}{rCl}\label{eq_userDfiles}
(W_{d_k,\mathcal{T}}^{1},\ldots,W_{d_k,\mathcal{T}}^{L_1}:\mathcal{T}\subset [K], |\mathcal{T}|=t, k\notin \mathcal{T})^K_{k=1}
\end{IEEEeqnarray}
 to the  users. Since $\alpha$ groups  work in a parallel manner ($\alpha$ users  can concurrently deliver contents), and  each user in a group delivers a symbol containing $\min\{ \lfloor K/\alpha \rfloor -1, t\}$ subfiles requested by the users in its group, to send all requested subfiles in \eqref{eq_userDfiles}, we need
\begin{IEEEeqnarray}{rCl}\label{eq_centraTimes}
\frac{K\cdot\binom{K-1}{t}\cdot L_1}{\alpha\min\{\lfloor  \frac{K}{\alpha}\rfloor-1,t\}}
\end{IEEEeqnarray}
times of transmission each of rate $\frac{1}{L\binom{K}{t} }$. Notice that  $L_1$ is chosen according to \eqref{eq_Lcondition},  ensuring that   \eqref{eq_centraTimes} equals to an integer. Thus, the  rate sent by the users is 
\begin{IEEEeqnarray}{rCl}\label{eq_R2}
R_2&=&  \frac{K\cdot\binom{K-1}{t}\cdot L_1}{\alpha\min\{\lfloor  \frac{K}{\alpha}\rfloor-1,t\}}\cdot \frac{1}{L\binom{K}{t} }\nonumber\\
	&=&\frac{L_1}{L}\cdot K\left(1-\frac{M}{N}\right)\cdot \frac{1}{{\alpha\min\{\lfloor  \frac{K}{\alpha}\rfloor-1,t\}}}.
\end{IEEEeqnarray}





Now we describe the delivery of the subfiles sent by the server. For each $l=L_1+1,\ldots,L$, apply the  delivery strategy as in \cite{Centralized}. Specifically, the server sends $\oplus_{k\in \mathcal{S}} W_{d_k,\mathcal{S}\backslash\{k\}}^{l}
$
for all   $\mathcal{S}\subseteq[K]:|\mathcal{S}|=t+1$ and $l=L_1+1,\ldots,L$.
We obtain the  rate sent by the server 
\begin{equation} \label{eq_R1}
\begin{aligned}
		R_1=\frac{L_2}{L}\cdot K\left(1-\frac{M}{N}\right)\cdot \frac{1}{1+KM/N}.	 
\end{aligned}
\end{equation}
Since the server and users transmit the signals simultaneously, the transmission delay of the whole network is the maximum between $R_1$ and $R_2$, i.e., $R=\max\{R_1,R_2\}$.

Given \eqref{eq_Acondition}, \eqref{eq_R2} and \eqref{eq_R1}, we find that  $R_1=R_2$, which means that the transmission delay reaches the optimal point.  The transmission delay can thus be rewritten as
\begin{equation*}
	\begin{aligned}
	R
	&=K\left(1-\frac{M}{N}\right)\frac{1}{1+t+{\alpha\min\{\lfloor  \frac{K}{\alpha}\rfloor-1,t\}}}.
	\end{aligned}
\end{equation*}

In order to explain the key steps in caching scheme described above, we will then present a simple example.\\
\textbf{Example 1}: Consider a network consisting of $K=6$ users with cache size $M=4$, and a library of $N = 6$ files. Thus $t=KM/N=4$. Let $\alpha=2$, that is to say we separate the 6 users into 2 groups of equal size. The choice of the groups
is not unique. 
We choose $L_2=1$ and $L_1=2$ such that $\frac{K\binom{K-1}{t}L_1}{\min\{\alpha(\lfloor K/\alpha \rfloor-1),t\}}=15$ is an integer. 
Split each file $W_n$, for $n=1,\ldots,N$, into $3\binom{6}{4}=45$ subfiles:
 \[W_n = (W^l_{n,\mathcal{T}}: l\in[3],\mathcal{T}\subset[6], |\mathcal{T}|=4).\]
 
 We list all the requested subfiles by  the users as follows: for $l=1,2,3$,
\begin{equation*}
\begin{aligned}
&W_{d_1,\{2345\}}^{l},W_{d_1,\{2346\}}^{l},W_{d_1,\{2356\}}^{l},W_{d_1,\{2456\}}^{l},W_{d_1,\{3456\}}^{l}\\ &W_{d_2,\{1345\}}^{l},W_{d_2,\{1346\}}^{l},W_{d_2,\{1356\}}^{l},W_{d_2,\{1456\}}^{l},W_{d_2,\{3456\}}^{l}\\ &W_{d_3,\{1245\}}^{l},W_{d_3,\{1246\}}^{l},W_{d_3,\{1256\}}^{l},W_{d_3,\{1456\}}^{l},W_{d_3,\{2456\}}^{l}\\ &W_{d_4,\{1235\}}^{l},W_{d_4,\{1236\}}^{l},W_{d_4,\{1256\}}^{l},W_{d_4,\{1356\}}^{l},W_{d_4,\{2356\}}^{l}\\ &W_{d_5,\{1234\}}^{l},W_{d_5,\{1236\}}^{l},W_{d_5,\{1246\}}^{l},W_{d_5,\{1346\}}^{l},W_{d_5,\{2346\}}^{l}\\ &W_{d_6,\{1234\}}^{l},W_{d_6,\{1235\}}^{l},W_{d_6,\{1245\}}^{l},W_{d_6,\{1345\}}^{l},W_{d_6,\{2345\}}^{l}.
\end{aligned}
\end{equation*}

 The users can finish the transmission in different partitions.  Table 1 shows one kind of the partition for example and explains how the users send the requested subfiles for $l=1,2$. 
\begin{table}  
\centering
	\caption{Subfiles sent by users in different partition}
	\label{table_time}
	\begin{tabular}{cc}  	
		\toprule
		  $\{1,2,3\}$&$\{4,5,6\}$	\\
		  user 2:  $W_{d_1,\{2345\}}^{1}\oplus W_{d_3,\{1245\}}^{1}$&user 5: $W_{d_4,\{2356\}}^{1}\oplus W_{d_6,\{2345\}}^{1}$\\
		  user 2:  $W_{d_1,\{2346\}}^{1}\oplus W_{d_3,\{1246\}}^{1}$&user 5: $W_{d_4,\{1256\}}^{1}\oplus W_{d_6,\{1245\}}^{1}$\\
		  user 1:  $W_{d_2,\{1346\}}^{1}\oplus W_{d_3,\{1256\}}^{1}$&user 4: $W_{d_5,\{2346\}}^{1}\oplus W_{d_6,\{1345\}}^{1}$\\
		  user 3:  $W_{d_1,\{2356\}}^{1}\oplus W_{d_2,\{1356\}}^{1}$&user 6: $W_{d_4,\{1356\}}^{1}\oplus W_{d_5,\{1346\}}$\\
		  \midrule
		  $\{1,2,4\}$&$\{3,5,6\}$\\
		  user 2: $W_{d_1,\{2456\}}^{l}\oplus W_{d_4,\{1235\}}^{l}$&user 5: $W_{d_3,\{1456\}}^{l}\oplus W_{d_6,\{1235\}}^{l}$\\
\midrule
		  $\{1,4,6\}$&$\{2,3,5\}$\\
		  user 6:  $W_{d_1,\{3456\}}^{l}\oplus W_{d_4,\{1236\}}^{l}$&user 3: $W_{d_2,\{3456\}}^{l}\oplus W_{d_5,\{1234\}}^{l}$\\
\midrule	
		  $\{1,2,5\}$&$\{3,4,6\}$\\
		  user 1: $W_{d_2,\{1456\}}^{l}\oplus W_{d_5,\{1236\}}^{l}$&user 4:  $W_{d_3,\{2456\}}^{l}\oplus W_{d_6,\{1234\}}^{l}$\\
\midrule
		  $\{1,2,3\}$&$\{4,5,6\}$\\
		  user 3: $W_{d_1,\{2345\}}^{2}\oplus W_{d_2,\{1345\}}^{2}$&user 4: $W_{d_5,\{2346\}}^{2}\oplus W_{d_6,\{2345\}}^{2}$\\
		  user 3: $W_{d_1,\{2346\}}^{2}\oplus W_{d_2,\{1346\}}^{2}$&user 4: $W_{d_5,\{1246\}}^{2}\oplus W_{d_6,\{1245\}}^{2}$\\		  
		  user 2: $W_{d_1,\{2356\}}^{2}\oplus W_{d_3,\{1245\}}^{2}$&user 5: $W_{d_4,\{1356\}}^{2}\oplus W_{d_6,\{1345\}}^{2}$\\
		  user 1: $W_{d_3,\{1246\}}^{2}\oplus W_{d_2,\{1356\}}^{2}$&user 6: $W_{d_4,\{1256\}}^{2}\oplus W_{d_5,\{1346\}}^{2}$\\
		  user 1: $W_{d_3,\{1256\}}^{2}\oplus W_{d_2,\{1345\}}^{1}$&user 6: $W_{d_5,\{1246\}}^{1}\oplus W_{d_4,\{2356\}}^{2}$\\
		\bottomrule  
	\end{tabular}
\end{table}
In Table \ref{table_time}, all the users send an XOR symbol of subfiles with superscript  $l=1$  at the beginning. Note that the subfiles  $W_{d_2,\{1345\}}^1$ and $W_{d_5,\{1246\}}^1$ are left since $\frac{K\binom{K-1}{t}}{\alpha(\lfloor K/\alpha \rfloor-1)}$ is not an integer. Similarly, for subfiles with  $l=2$,   $W_{d_3,1256}^2$ and $W_{d_4,2356}^2$ are not sent to user 3 and 4. In the last transmission,  user 1 delivers the XOR message $W_{d_3,\{1256\}}^{2}\oplus W_{d_2,\{1345\}}^{1}$  to user 2 and 3, and user 6 delivers $W_{d_5,\{1246\}}^{1}\oplus W_{d_4,\{2356\}}^{2}$ to user 5 and 6. The  rate transmitted by the users is $R_2=\frac{1}{3}.$

  The server delivers subfiles for $l=3$   in the same way as in \cite{Centralized}. Specifically, it sends symbols $\oplus_{k\in \mathcal{S}} W_{d_k,\mathcal{S}\backslash\{k\}}^{3}$,
for all $\mathcal{S}\subseteq[K]:|\mathcal{S}|=5$. Thus the rate sent by the server is $R_1=\frac{2}{15}$,
and the transmission delay $R=\max\{R_1,R_2\}=\frac{1}{3}$,
which is less than the delay achieved by the centralized coded caching scheme without user cooperation.

\section{Conclusions}\label{Sec_Conclusion}
 In this paper, we consider a coded-caching broadcast network with user cooperation. An order optimal  scheme is proposed which achieves a cooperation gain and a parallel gain, by exploiting user cooperation and parallel transmission at the server and users. Furthermore, we show that  letting more  users  parallelly send information could be harmful. The directions of  future research  could be on  the decentralized caching problem  and wireless network with caching and cooperation.
\appendices

\section{Converse Proof}\label{App_converse}

	
	Due to the flexibility of  cooperation network, the connection and partitioning status between users can change during the delivery phase, we can't drive our converse directly like \cite{Centralized}.  Moreover, the parallel transmission of the server and many users results in abundant transmitting signals, making the scenario more sophisticated. 

Let $R^*_1$ and $R^*_2$ denote the optimal rate sent by the server and each user. We first consider an ideal case where every user is served by a exclusive server and  user, which both store full files in the database, then  we easy to obtain $R^*\geq \frac{1}{2}(1-\frac{M}{N}).$ 

 Next,  consider the first $s$ users with cache contents $Z_1,...,Z_s$. Define $X_{1,0}$ to be the signal sent by the server, and $X_{{1,1}},\ldots,X_{{1,\alpha_\text{max}}}$ to be the signals sent by the $\alpha_\text{max}$ users, respectively,  where $X_{{j,i}}\in[\lfloor 2^{R^*_2F} \rfloor]$ for $j\in[s]$ and $i\in[\alpha_\text{max}]$. Assume that $W_1,\ldots,W_s$ is determined by  $X_{1,0}$, $X_{{1,1}},\ldots,X_{{1,\alpha_\text{max}}}$ and  $Z_1,\ldots,Z_s$. Also,  define $X_{2,0}$, $X_{{2,1}},\ldots,X_{{2,\alpha_\text{max}}}$ to be the signals  which enable the users to decode $W_{s+1},...,W_{2s}$. Continue the same process such that $X_{\lfloor N/s\rfloor,0}$, $X_{{\lfloor N/s\rfloor,1}},\ldots,X_{{\lfloor N/s\rfloor,\alpha_\text{max}}}$  are the signals which enable the users to decode $W_{s\lfloor N/s\rfloor-s+1},...,W_{s\lfloor N/s\rfloor}$. We then have $Z_1,\ldots,Z_s$,  $X_{1,0},\ldots,X_{\lfloor N/s\rfloor,0}$, and $X_{{1,1}},  \ldots,X_{{1,\alpha_\text{max}}},\ldots, X_{{\lfloor N/s\rfloor,1}},\ldots,X_{{\lfloor N/s\rfloor,\alpha_\text{max}}}$   determine $W_{1},\ldots,W_{s\lfloor N/s\rfloor}$. Let
\[ 
 {\bf{X}}_{1:\alpha_\text{max}} \triangleq (X_{{1,1}},  \ldots,X_{{1,\alpha_\text{max}}},\ldots, X_{{\lfloor N/s\rfloor,1}},\ldots,X_{{\lfloor N/s\rfloor,\alpha_\text{max}}}). 
\]
By the definitions of $R^*_1$,  $R^*_2$ and the encoding function \eqref{eq:userencoding}, we have 
\begin{subequations}\label{eq_entropy}
	\begin{IEEEeqnarray}{rCl}
		&&H(X_{1,0},\ldots,X_{\lfloor  {N}/{s}\rfloor,0}) \leq \lfloor  {N}/{s}\rfloor R^*_1F,\\
		&&H({\bf{X}}_{1:\alpha_\text{max}}) \leq \lfloor {N}/{s}\rfloor \alpha_\text{max} R^*_2F, \\
		&&H({\bf{X}}_{1:\alpha_\text{max}},Z_1,\ldots,Z_s) \leq  KMF.
	\end{IEEEeqnarray}
\end{subequations}

Consider then the cut separating $X_{1,0},\ldots,X_{\lfloor N/s\rfloor,0}$, ${\bf{X}}_{1:\alpha_\text{max}}$, and $Z_1,\ldots,Z_s$ from the corresponding $s$ users. By the cut-set bound and \eqref{eq_entropy}, we have
\begin{IEEEeqnarray}{rCl}
	\lfloor \frac{N}{s}\rfloor sF &\leq&\lfloor \frac{N}{s}\rfloor R^*_1F+KMF,\\
	\lfloor \frac{N}{s}\rfloor sF   & \leq&  \lfloor \frac{N}{s}\rfloor R^*_1F+sMF+\lfloor \frac{N}{s} \rfloor \alpha_\text{max} R^*_2F.
\end{IEEEeqnarray}
Since we have $R^*\geq R^*_1$ and $R^*\geq \max\{R^*_1,R^*_2\}$ from the above definition, solving for $R^*$ and optimizing over all possible choices of $s$, we obtain
\begin{subequations}\label{eq_proofCutset}
	\begin{IEEEeqnarray}{rCl}
		R^*&\geq &\max\limits_{s\in [K]}(s-\frac{KM}{\lfloor N/s\rfloor}),\\
		R^*&\geq& \max\limits_{s\in [K]}(s-\frac{sM}{\lfloor N/s\rfloor})\frac{1}{1+\alpha_\text{max}}.
	\end{IEEEeqnarray}
\end{subequations}

\section{Proof of Theorem \ref{Thrm_Gap}}\label{App_Gap}

We prove that  $R_\text{C}$  is within a constant multiplicative gap of the minimum feasible delay $R^*$ for all values of $M$. To prove the result, we
compare  them in the  following regimes.
\begin{itemize}
\item 
If $ 0.6393 < t<  \lfloor  K/\alpha\rfloor-1$, from Theorem \ref{Thrm_LowerBound}, we have 
\begin{equation}\label{eq_as}
	\begin{aligned}
	R^*&\geq(s-\frac{Ms}{\lfloor N/s\rfloor})\frac{1}{1+\alpha_\text{max}}\\
	   &\overset{(a)}\geq \frac{1}{12}\cdot K\Big(1-\frac{M}{N}\Big)\frac{1}{1+t}\cdot\frac{1}{1+\alpha_\text{max}},
	\end{aligned}
\end{equation}
where (a) follows from   \cite[Theorem 3]{Centralized}. Then we have
\begin{IEEEeqnarray}{rCl}
\frac{R_\text{C}}{R^*}
&\leq& 12\cdot\frac{(1+\alpha_\text{max})(1+t)}{1+t+\alpha t}\nonumber\\
	&=&12\cdot\frac{(1+\alpha_\text{max})}{1+\alpha t/(1+t)}\nonumber\\
	&\leq& 12\cdot\frac{(1+\alpha_\text{max})}{1+\alpha\cdot 0.6393/(1+0.6393)}\nonumber\\
	&\leq &31
\end{IEEEeqnarray}
where the last inequality holds since we can choose $\alpha=\alpha_\text{max}$. 
\item If $t> \lfloor  K/\alpha\rfloor-1  $, we have 
\begin{IEEEeqnarray}{rCl}
\frac{R_\text{C}}{R^*} &\leq  &\frac{K(1-\frac{M}{N})\frac{1}{1+t+\alpha (\lfloor K/\alpha \rfloor-1)}}{\frac{1}{2}(1-\frac{M}{N})}\nonumber\\
&= &\frac{2K}{1+t+\alpha (\lfloor K/\alpha \rfloor-1)}\nonumber\\
&\overset{(a)}\leq  &\frac{2K}{K+KM/N}\nonumber\\
&\leq& 2
\end{IEEEeqnarray}
where $(a)$ follows from that  we can  choose $\alpha=1$.

\item If $t\leq 0.6393$, 
setting $s=0.275N$, we have
\begin{IEEEeqnarray}{rCl}\label{eq_lower}
R^* &\geq& s-\frac{KM}{\lfloor N/s\rfloor}\nonumber\\
			&\overset{(a)}\geq& s-\frac{KM}{N/s-1}\nonumber\\
			&=&0.275N-t\cdot0.3793N\nonumber\\
			&\geq& 0.0325N> \frac{1}{31}\cdot N
\end{IEEEeqnarray}

where $(a)$ holds since $\lfloor x \rfloor \geq x-1$ for any $x \geq 1$. 
Note that for all values of $M$, the transmission delay 
\begin{equation}\label{upper}
	R_\text{C}\leq \min \{K, N\}. 
\end{equation}
Combining with \eqref{eq_lower} and \eqref{upper}, we have ${R_\text{C}}/{R^*} \leq 31$.

\end{itemize}


\begin{thebibliography}{00}
\bibitem{Centralized} M. A. Maddah-Ali and U. Niesen, ``Fundamental limits of caching,'' \textit{IEEE Trans. Info. Theory,} vol. 60, no. 5, pp. 2856--1867, May 2014.

\bibitem{Decentralized} M. A. Maddah-Ali and U. Niesen, ``Decentralized coded caching attains order-optimal memory-rate tradeoff,'' \textit{IEEE/ACM Trans. on Networking,} vol. 23, no. 4, pp. 1029--1040, Aug. 2015.



\bibitem{Multiserver'16} S. P. Shariatpanahi, S. A. Motahari, and B. H. Khalaj, ``Multi-server coded caching,'' \textit{ IEEE Trans. on Info Theory,} vol. 62, no. 12, pp. 7253--7271, Dec. 2016.

\bibitem{improvedgap}
C. Wang, S. Saeedi Bidokhti and M. Wigger, ``Improved converses and gap results for coded caching,''  \emph{IEEE Trans. Info. Theory}, vol. 64, no. 11, pp. 7051-7062, Nov. 2018.


\bibitem{GaussianChannel'17}
S. S. Bidokhti, M. Wigger and A. Yener, ``Gaussian broadcast channels with receiver cache assignment,'' in \textit{IEEE International Conference on Communications (ICC)},  May 2017, pp. 1-6.

\bibitem{intermanage'17}
N. Naderializadeh, M. A. Maddah-Ali and A. S. Avestimehr, ``Fundamental limits of cache-aided interference management," \textit{IEEE Trans. Info. Theory}, vol. 63, no. 5, pp. 3092-3107, May 2017.


\bibitem{StoreandLate'17}
F. Xu, M. Tao and K. Liu, ``Fundamental tradeoff between storage and latency in cache-aided wireless interference Networks,'' \textit{IEEE Trans. Info. Theory}, vol. 63, no. 11, pp. 7464-7491, Nov. 2017.


\bibitem{Fog'Overiew}
M. Chiang and T. Zhang, ``Fog and IoT: An overview of research opportunities," \emph{IEEE Internet of Things Journal}, vol. 3, no. 6, pp. 854-864, Dec. 2016.






\bibitem{D2D} M. Ji, G. Caire, and A. F. Molisch, ``Fundamental limits of caching in wireless D2D networks,'' \textit{IEEE Trans. Info. Theory,} vol. 62, no. 2, pp. 849--869, Feb. 2016.

\bibitem{Malak'18}
D. Malak, M. Al-Shalash and J. G. Andrews, ``Spatially correlated content caching for device-to-device communications," \emph{IEEE Transactions on Wireless Communications}, vol. 17, no. 1, pp. 56-70, Jan. 2018.


\bibitem{Karasik'18}
R. Karasik, O. Simeone and S. Shamai, ``Fundamental latency limits for D2D-aided content delivery in fog Wireless networks," \emph{IEEE Int. Symp. Inf. Theory}, 2018, pp. 2461-2465.


\bibitem{Pedersen'19}
J. Pedersen, A. Graelli Amat, I. Andriyanova and F. Br\"annstr\"om, ``Optimizing MDS coded caching in wireless networks with device-to-device communication," \emph{IEEE Transactions on Wireless Communications}, vol. 18, no. 1, pp. 286-295, Jan. 2019.

\bibitem{arXiv'19}
J. Chen, H. Yin, X. You, Y. Geng, and Y. Wu, ``Centralized coded caching with user cooperation," available at arXiv.  2019.

\end{thebibliography}
\end{document}